\title{Systematic Classification of Attackers via Bounded Model Checking}
\author{Eric Rothstein-Morris\inst{1} \and Sun Jun\inst{2} \and Sudipta Chattopadhyay\inst{1}}
\institute{$^1$Singapore University of Technology and Design \email{\{eric\_rothstein,sudipta\_chattopadhyay\}@sutd.edu.sg}\\
$^2$ Singapore Management University \\\email{junsun@smu.edu.sg}}
\newcommand{\Bool}{{\mathbb{B}}}
\newcommand{\Always}{{\Globally}}
\newcommand{\Globally}{{\square}}
\newcommand{\vect}[1]{\ensuremath{\overrightarrow{\bm{#1}}}}
\newcommand{\AsSequence}[1]{{{\left<#1\right>}}}
\newcommand{\ThePowersetOf}[1]{{\Powerset\!\left({#1}\right)}}
\newcommand{\Powerset}{{\mathscr{P}}}
\begin{document}
\maketitle
\begin{abstract}
In this work, we study the problem of verification of systems in the presence of attackers using bounded model checking. Given a system and a set of security requirements, we present a methodology to generate and classify attackers, mapping them to the set of requirements that they can break. A naive approach suffers from the same shortcomings of any large model checking problem, i.e., memory shortage and exponential time. To cope with these shortcomings, we describe two sound heuristics based on cone-of-influence reduction and on learning, which we demonstrate empirically by applying our methodology to a set of hardware benchmark systems.
\end{abstract}

\section{Introduction}
\label{sec:Introduction}
\subsubsection{Problem Context.} 
Some systems are designed to provide security guarantees in the presence of attackers. For example, the Diffie-Hellman key agreement protocol guarantees perfect forward secrecy \cite{Gunther1990,Menezes1996} (PFS), i.e., that the session key remains secret even if the long-term keys are compromised. These security guarantees are only valid in the context of the attacker models for which they were proven; more precisely, those guarantees only hold for attackers that fit those or weaker attacker models. For instance, PFS describes an attacker model (i.e., an attacker that can compromise the long-term keys, \emph{and only those}), and a property that is guaranteed in the presence of an attacker that fits the model (i.e., confidentiality of the session keys). However, if we consider an attacker model that is stronger (e.g., an attacker that can directly compromise the session key), then Diffie-Hellman can no longer guarantee the confidentiality of the session keys. Clearly, it is difficult to provide any guarantees against an attacker model that is too capable, so it is in the interest of the system designer to choose an adequate attacker model that puts the security guarantees of the system in the context of realistic and relevant attackers.

Consider the following research question: \textbf{RQ1)} given a system and a list of security requirements, how do we systematically generate attackers that can potentially break these requirements, and how do we verify if they are successful? We approach this question at a high level for a system $S$ with a set $C$ of $n$ components, and a set of security requirements $R$ as follows. Let $A$ be a subset of $C$; the set $A$ models an attacker that can interact with $S$ by means of each component $c$ in it. More precisely, for every component $c$ in $A$, the attacker can change the value of $c$ at any time and any number of times during execution, possibly following an attack strategy. Considering the exponential size of the set of attackers (i.e., $2^{n}$), a brute-force approach to checking whether each of those attackers breaks each requirement in $R$ is inefficient for two reasons: 1) an attacker $A$ may only affect an isolated part of the system, so requirements that refer to other parts of the system should not be affected by the presence of $A$, and 2) if some attacker $B$ affects the system in a similar way to $A$ (e.g., if they control a similar set of components), then the knowledge we obtain while verifying the system in the presence of $A$ may be useful when verifying the system in the presence of $B$. These two reasons motivate a second research question: \textbf{RQ2)} which techniques can help us efficiently \emph{classify} attackers, i.e., to map each attacker to the set of requirements that it breaks? 

To answer these two research questions in a more concrete and practical context, we study systems modelled by \emph{And-inverter Graphs (AIGs)} (see  \cite{AIGs,AIGs2}). AIGs describe hardware models at the bit-level \cite{AIGER}, and have attracted the attention of industry partners including IBM and Intel \cite{HWMCC2014BM}. Due to being systems described at bit-level, AIGs present a convenient system model to study the problem of attacker classification, because the range of actions that attackers have over components is greatly restricted: either the attacker leaves the value of the component as it is, or the attacker negates its current value. However, this approach can be generalised to other systems by considering non-binary ranges for components, and by allowing attackers to choose any value in those ranges.

\subsubsection{Contributions.} In this paper, we provide:
\begin{itemize}
\item a formalisation of attackers of AIGs and how they interact with systems, 
\item a methodology to perform bounded model checking while considering the presence of attackers,
\item a set of heuristics that efficiently characterise attacker frontiers for invariant properties using bounded model checking,
\item experimental evidence of the effectiveness of the proposed methodology and heuristics.
\end{itemize}
\section{Preliminaries}
\label{sec:preliminaries}
In this section, we provide the foundation necessary to formally present 
the problem of model checking {And-inverter Graph} (AIGs) in the presence of attackers. 
Let $\Bool=\set{0,1}$ be the set of booleans. An \emph{And-inverter Graph} 
models a system of equations that has $m$ boolean 
inputs, $n$ boolean state variables and $o$ boolean gates. The elements in the set 
$W=\set{w_1, \ldots, w_m}$ represent the \emph{inputs}, the elements in 
$V=\set{v_1, \ldots, v_n}$ represent the \emph{latches}, and the elements in 
$G=\set{g_1, \ldots, g_o}$ represent the \emph{and-gates}. We assume that $W$, $V$ and $G$ are pairwise disjoint, and we define the set of \emph{components} $C$ by $C\triangleq W\cup V \cup G$. 

An \emph{expression} $e$ is described by the grammar $e::= 0\ |\ 1\ |\ c\ |\ \lnot c$, where $c \in C$. The set of all expressions is $E$. We use discrete time steps $t=0,1,..$ to describe the system of equations. To each latch $v\in V$ we associate a transition {equation} of the form $v(t+1) = e(t)$ and an initial equation of the form $v(0)=b$, where $e\in E$ and $b\in \Bool$. To each gate $g\in G$ we associate an equation of the form $g(t)=e_1(t)\land e_2(t)$, where $e_1,e_2\in E$. 

\begin{quote}
\begin{example}
\label{ex:simple}
Figure~\ref{fig:Example} shows an example AIG with $W=\set{w_1,w_2}$, $V=\set{v_1}$ and $G=\set{g_1,g_2}$. The corresponding system of equations is
\begin{align*}
&v_1(0)=1, & v_1(t+1)=\lnot g_2(t),\\
&g_1(t)=\lnot w_1(t) \land \lnot w_2(t), & g_2(t)=g_1(t) \land \lnot v_1(t).
\end{align*}

\end{example}
\end{quote}
\begin{figure}[!t]
\begin{minipage}{0.6\textwidth}
\begin{framed}
\includegraphics[width=\textwidth]{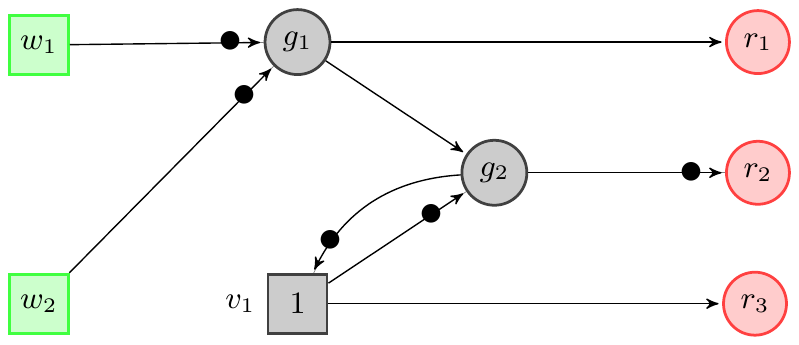}
\end{framed}
\end{minipage}
\begin{minipage}{0.35\textwidth}
\centering
\begin{framed}
\includegraphics[width=\textwidth]{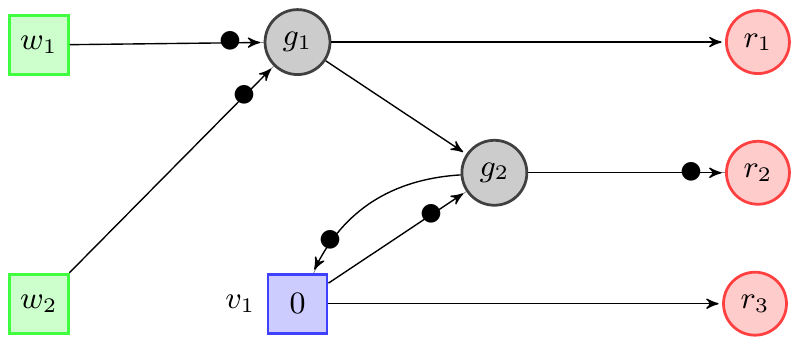}\\
\end{framed}
\begin{framed}
\includegraphics[width=\textwidth]{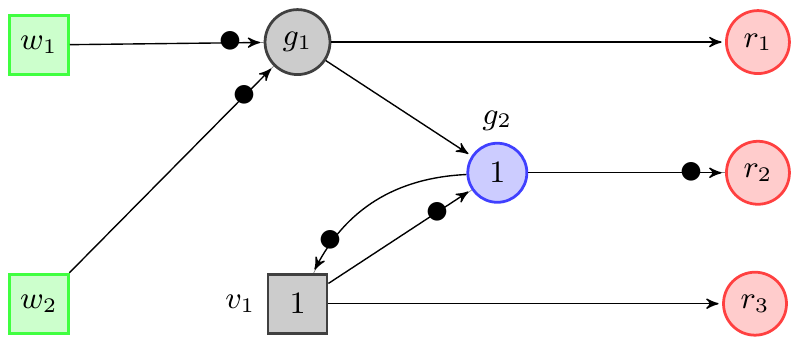}
\end{framed}
\end{minipage}
\caption{\textbf{Left:} And-inverter graph describing a system with two
inputs $w_1$ and $w_2$ (green boxes), one latch $v_1$ with initial value 1 (grey box), two gates $g1$ and $g2$ (gray circles), and three invariant requirements $r_1=\Always g_1$, $r_2=\Always \lnot g_2$ and $r_3=\Always v_1$ (red circles). Arrows represent logical dependencies, and bullets in the arrows imply negation. \newline
\textbf{Right above}: an attacker that controls latch $v_1$ can set its value to 0 and break $r_2$ and $r_3$ in 0 steps.\newline
\textbf{Right below}: an attacker that controls gate $g_2$ can set its value to 1 and break $r_2$ in 0 steps and $r_3$ in 1 step, because the value of $v_1$ at time 1 is 0. }
\label{fig:Example}
\end{figure}

The states of a system are all the different valuations of the variables in $V$; 
formally, a state $\vect{v} \colon V \rightarrow \Bool$ is a map of elements of $V$ to booleans (i.e. a vector of bits). Similarly, the 
valuations of the variables in $W$ are all the inputs to the systems; 
again, an input $\vect{w}\colon W \rightarrow \Bool$ is a map of elements of $W$ to booleans (also a vector of bits). We refer to the set of all states by $\vect{V}
$, and to the set of all inputs by $\vect{W}
$.
For  $t=0,1,...$, we denote the state of the system at time $t$ by $\vect{v}(t)$, with 
$\vect{v}(t)\triangleq \AsSequence{v_1(t), \ldots, v_n(t)}$. 
The initial state is $\vect{v}(0)$, defined by the initial equations for the latches. Similarly, we denote the input of the system at time $t$ by $\vect{w}(t)$, with $\vect{w}(t)\triangleq\AsSequence{w_1(t), \ldots, w_n(t)}$. There are no restrictions or assumptions over $\vect{w}(t)$, so it can take any value in $\vect{W}$.
\begin{quote}
In Example~\ref{ex:simple}, the set of states is $\vect{V}=\set{\AsSequence{0},\AsSequence{1}}$, the set of all inputs is $\vect{W}=\set{\AsSequence{0,0},\AsSequence{0,1},\AsSequence{1,0},\AsSequence{1,1}}$, and the initial state is $\vect{v}(0)=\AsSequence{1}$.
\end{quote}

Given an expression $e\in E$, the invariant $\Always e$ is the property that requires $e(t)$ to be true for all $t\geq 0$. The system $S$ \emph{fails} the invariant $\Always e$ iff there exists a finite sequence of input vectors $\AsSequence{\vect{w}_0, \ldots, \vect{w}_t}$ such that, if we assume $\vect{w}(t)=\vect{w}_t$, then $e(t)$ is false. The system \emph{satisfies} the invariant $\Always e$ if no such sequence of input vectors exists. Every expression $e$ represents a boolean predicate over the state of the latches of the system, and can be used to characterise states that are (un)safe. These expressions are particularly useful in safety-critical hardware, as they can signal the approach of a critical state.

\begin{quote}
In Example~\ref{ex:simple}, we define three requirements: $r_1\triangleq\Always g_1$, $r_2 \triangleq\Always \lnot g_2$, and $r_3\triangleq\Always v_1$. This system satisfies $r_2$ and $r_3$, but it fails $r_1$ because $w_1=1$ and $w_2=0$ results in $g_1(0)$ being 0.
\end{quote}

The \emph{Cone-of-Influence} (COI) is a mapping from an expression to the components that can potentially influence its value. We obtain the COI of an expression $e \in E$, denoted $\blacktriangledown(e)$, by transitively tracing its dependencies to inputs, latches and gates. More precisely, 
\begin{itemize}
\item $\blacktriangledown(0)=\emptyset$ and $\blacktriangledown(1)=\emptyset$;
\item if $e=\lnot c$ for $c \in C$, then $\blacktriangledown(e)= \blacktriangledown(c)$;
\item if $e=w$ and $w$ is an input, then $\blacktriangledown(e)=\set{w}$;
\item if $e=v$ and $v$ is a latch whose transition equation is $l(t+1) = e'(t)$, then $\blacktriangledown(e)=\set{v} \cup \blacktriangledown(e')$;
\item if $e=g$ and $g$ is a gate whose equation is $g(t) = e_1(t) \land e_2(t)$ then $\blacktriangledown(e)=\set{g} \cup \blacktriangledown(e_1)\cup \blacktriangledown(e_2)$.
\end{itemize}
The COI of a requirement $r=\Always e$ is $\blacktriangledown(r)\triangleq \blacktriangledown(e)$.

\begin{quote}
In Example~\ref{ex:simple}, the COI for the requirements are $\blacktriangledown(\Always g_1)=\set{g_1,w_1,w_2}$, and $\blacktriangledown(\Always g_2)=\blacktriangledown(\Always g_3)=\set{g_1,g_2,v_1,w_1,w_2}$.
\end{quote}

The set of \emph{sources} of an expression $e\in E$, denoted $\mathtt{src}(e)$ is the set of latches and inputs in the COI of $e$; formally, $\mathtt{src}(e)\triangleq \blacktriangledown(e) \cap (V \cup W)$. The \emph{Jaccard index} of two expressions $e_1$ and $e_2$ is equal to $
\frac{|\mathtt{src}(e_1)\cap \mathtt{src}(e_2)|}{|\mathtt{src}(e_1)\cup\mathtt{src}(e_2)|}
$. This index provides a measure of how similar the sources of $e_1$ and $e_2$ are. 

The \emph{dual cone-of-influence} (IOC) of a component $c \in C$, denoted $\blacktriangle(c)$, is the set of components influenced by $c$; more precisely 
$\blacktriangle(c) \triangleq \set{c' \in  C | c\in \blacktriangledown(c')}.$

\section{Motivational Example}
\label{sec:Example}
In this section, we provide a motivational example of the problem of model checking compromised systems, and we illustrate how to classify attackers given a list of security requirements. Consider a scenario where an attacker $A$ controls the gate $g_2$ of Example~\ref{ex:simple}. By controlling $g_2$, we mean that $A$ can set the value of $g_2(t)$ at will for all $t\geq 0$. Since $r_2=\Always \lnot g_2$, it is possible for $A$ to break $r_2$ by setting $g_2(0)$ to $1$. We note that the same strategy works to break both requirements, but it need not be in the general case; i.e., an attacker may have one strategy to break one requirement, and a different strategy to break another. $A$ can also break $r_3=\Always v_1$, because, if $A$ sets $g_2(0)$ to 1, then $v_1(1)$ is equal to $0$. Since the original system fails to enforce $r_1$, we say that $A$ has the \emph{power to break the requirements} $r_1$, $r_2$ and $r_3$. Now, consider a different attacker $B$ which only controls the gate $g_1$. No matter what value $B$ chooses for $g_1(t)$ for all $t$, it is impossible for $B$ to break $r_2$ or $r_3$, so we say that $B$ only has the power to break $r_1$. 

If we allow attackers to control any number of components, then there are $8$ different attackers, described by the subsets of $\set{v_1,g_1,g_2}$. We do not consider attackers that control inputs, because the model checking of invariant properties requires the property to hold for all inputs, so giving control of inputs to an attacker does not make it more powerful (i.e. the attacker cannot break more requirements than it already could without the inputs). Figure~\ref{fig:borders} illustrates the classification of attackers depending on whether they can break a given requirement or not. Based on it, we can provide the following security guarantees: 1) the system cannot enforce $r_1$, and 2) that the system can only enforce $r_2$ and $r_3$ in the presence of attackers that are as capable to interact with the system as $\set{g_1}$ (i.e. they only control $g_1$ or nothing).

According to the classification, attacker $\set{g_2}$ is as powerful as the attacker $\set{v_1,g_1,g_2}$, since both attackers can break the same requirements $r_1$, $r_2$ and $r_3$. This information may be useful to the designer of the system, because it may prioritise attackers that control less components but are as powerful as attackers that control more when deploying defensive mechanisms.

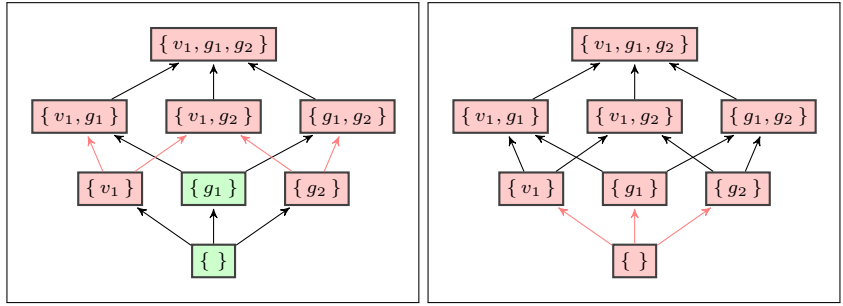
\begin{figure}[!t]
\centering
\begin{minipage}{0.45\textwidth}
{
\scriptsize
\begin{framed}
\begin{tikzpicture}[node distance=1cm,>=stealth',auto, pin distance=0.1cm]
\tikzstyle{safe}=[rectangle,thick,draw=black!75,fill=green!20,minimum size=2mm]
\tikzstyle{fail}=[rectangle,thick,draw=black!75,fill=red!20,minimum size=2mm]
\tikzstyle{dnc}=[rectangle,thick,draw=black!75,fill=red!20,minimum size=2mm]
\node [dnc] (all) {$\set{v_1,g_1,g_2}$};
\node [dnc] (v1g2) [below= 0.5cm of all] {$\set{v_1,g_2}$}
edge [post](all);
\node [dnc] (v1g1) [left = 0.5cm of v1g2] {$\set{v_1,g_1}$}
edge [post](all);
\node [dnc] (g1g2) [right = 0.5cm of v1g2] {$\set{g_1,g_2}$}
edge [post](all);
\node [safe] (g1) [below= 0.5cm of v1g2] {$\set{g_1}$}
edge [post](v1g1)
edge [post](g1g2);
\node [fail] (v1) [left = 0.5cm of g1] {$\set{v_1}$}
edge [post,color=red!50](v1g1)
edge [post,color=red!50](v1g2);
\node [fail] (g2) [right = 0.5cm of g1] {$\set{g_2}$}
edge [post,color=red!50](v1g2)
edge [post,color=red!50](g1g2);
\node [safe] (empty) [below= 0.5cm of g1] {$\set{}$}
edge [post](v1)
edge [post](g1)
edge [post](g2);
 \end{tikzpicture}
 \end{framed}}

\end{minipage}
\begin{minipage}{0.45\textwidth}
{\scriptsize
\begin{framed}
\begin{tikzpicture}[node distance=1cm,>=stealth',auto, pin distance=0.1cm]
\tikzstyle{safe}=[rectangle,thick,draw=black!75,fill=green!20,minimum size=2mm]
\tikzstyle{fail}=[rectangle,thick,draw=black!75,fill=red!20,minimum size=2mm]
\tikzstyle{fail2}=[diamond, draw=black!75,fill=red!20,minimum size=1mm]
\tikzstyle{dnc}=[rectangle,thick,draw=black!75,fill=red!20,minimum size=2mm]
\node [dnc] (all) {$\set{v_1,g_1,g_2}$};
\node [dnc] (v1g2) [below= 0.5cm of all] {$\set{v_1,g_2}$}
edge [post](all);
\node [dnc] (v1g1) [left = 0.5cm of v1g2] {$\set{v_1,g_1}$}
edge [post](all);
\node [dnc] (g1g2) [right = 0.5cm of v1g2] {$\set{g_1,g_2}$}
edge [post](all);
\node [dnc] (g1) [below= 0.5cm of v1g2] {$\set{g_1}$}
edge [post](v1g1)
edge [post](g1g2);
\node [dnc] (v1) [left = 0.5cm of g1] {$\set{v_1}$}
edge [post](v1g1)
edge [post](v1g2);
\node [dnc] (g2) [right = 0.5cm of g1] {$\set{g_2}$}
edge [post](v1g2)
edge [post](g1g2);
\node [fail] (empty) [below= 0.5cm of g1] {$\set{}$}
edge [post,color=red!50](v1)
edge [post,color=red!50](g1)
edge [post,color=red!50](g2);
 \end{tikzpicture}
  \end{framed}}
\end{minipage}
\caption{Left: classification of attackers for requirements $r_2$ and $r_3$. Right: classification of attackers for requirement $r_1$. A green attacker cannot break the requirement, while a red attacker can.}
\label{fig:borders}
\end{figure}

\section{Bounded Model Checking of Compromised Systems}
\label{sec:bmc}
We recall the research questions that motivate this work: \textbf{RQ1)} given a system and a list of security requirements, how do we systematically generate attackers that can potentially break these requirements, and how do we verify if they are successful? and \textbf{RQ2)} which techniques can help us efficiently \emph{classify} attackers, i.e., to map each attacker to the set of requirements that it breaks? 
In this section, we aim to answer these research questions on a theoretical level by formalising the problem of attacker classification via bounded model checking AIGs in the presence of attackers. More precisely, to answer \textbf{RQ1}, we formalise attackers and their interactions with systems, and we show how to systematically generate bounded model checking problems that solve whether some given attackers can break some given requirements. We then propose two methods for the classification of attackers: 1) a brute-force method that creates a model checking problem for each attacker-requirement pair, and 2) a method that incrementally empowers attackers to find ``minimal attackers,'' since minimal attackers represent large portions of the universe of attackers thanks to a monotonicity relation between the set of components controlled by the attacker and the set of requirements that the attacker can break. The latter method is a theoretical approach to answer \textbf{RQ2}, while its practical usefulness is evaluated in Section~\ref{sec:evaluation}.

\subsection{Attackers and Compromised Systems}
Since an AIG describes a system of equations, to incorporate the actions of an attacker $A$ into the system, we modify the equations that are associated to the components controlled by $A$. Let $S=(W,V,G)$ be a system described by an AIG, let $R=\set{r_1, \ldots, r_n}$ be a set of invariant requirements for $S$, and let $C=W\cup V\cup G$ be the set of components of $S$. By definition, an \emph{attacker} $A$ is any subset of $C$. If a component $c$ belongs to an attacker $A$, then $A$ has the \emph{capability to interact with $S$ through} $c$. We modify the equations of every latch $v\in V$ to be parametrised by an attacker $A$ as follows: the original transition equation $v(t+1)=e(t)$ and the initial equation $v(0)=b$ changes to
\begin{align}
\label{eq:badLatch}
v(t+1) = \begin{cases}
e(t), \quad &\text{if $v\not \in A$;}\\
A_v(t+1), \quad &\text{otherwise},
\end{cases}
\quad 
v(0)= \begin{cases}
b, \quad &\text{if $v \not \in A$;}\\
A_v(0), \quad &\text{otherwise.}
\end{cases}
\end{align}
where $A_v(t)$ is a value chosen by the attacker $A$ at time $t$. Similarly, we modify the equation of gate $g\in G$ as follows: the original equation $g(t)=e_1(t)\land e_2(t)$ changes to
\begin{align}
\label{eq:badGate}
g(t) = \begin{cases}
e_1(t)\land e_2(t), \quad &\text{if $g\not \in A$;}\\
A_g(t), \quad &\text{otherwise},
\end{cases}
\end{align}
where $A_g(t)$ is, again, a value chosen by the attacker $A$ at time $t$.

An \emph{attack} $\vect{a}\colon A\rightarrow \Bool$ is a map of components in $A$ to booleans. An \emph{attack strategy} is a finite sequence of attacks $(\vect{a}_0, \vect{a}_1, \ldots, \vect{a}_t)$ that fixes the values of all $A_c(k)$ (used in the equations above) by $A_c(k)=\vect{a}_k(c)$, with $c\in A$ and $0 \leq k \leq t$.

We use $A[S]$ to interpret the system $S$ under the influence of attacker $A$ (i.e.,  $A[S]$ is the modified system of equations). 
Given a requirement $r\in R$ with $r=\Always e$, we say that $A$ \emph{breaks the requirement $r$} if and only if there exists a sequence of inputs of length $k$ and an attack strategy of length $k$ such that $e(k)$ is false. We denote the set of requirements that $A$ breaks by  $A[R]$. 

Finally, we define two partial orders for attackers: \textbf{1)} an attacker $A_i$ is strictly less \emph{capable} (to interact with the system) than an attacker $A_j$ in the 
context of $S$ 
iff $A_i\subseteq A_j$ and $A_i \neq A_j$. The {attacker $A_i$ is equally capable to attacker $A_j$ iff $A_i= A_j$}; and \textbf{2)} an attacker $A_i$ is strictly less \emph{powerful} than an attacker $A_j$ in the 
context of $S$ and $R$ 
iff $A_i[R]\subseteq A_j[R]$ and $A_i[R]\neq A_j[R]$. Similarly, {attacker $A_i$ is equally powerful to attacker $A_j$ iff $A_i[R]= A_j[R]$}. We simply state that $A_i$ is less capable than $A_j$ if $S$ is clear from the context. Similarly, we simply say that $A_i$ is less powerful than $A_j$ if $S$ and $R$ are clear from the context.

We can now properly present the problem of \emph{attacker classification}.
\begin{definition} [Attacker Classification via Model Checking]
\label{def:AttackerQuantification}
Given a system $S$, a set of requirements $R$, and a set of $h$ attackers 
$\set{A_1, \ldots , A_h}$, for every attacker $A$, we compute the set $A[R]$ of requirements that $A$ can break by performing model checking of each requirement in $R$ on the compromised system $A[S]$. 
\end{definition}

Definition~\ref{def:AttackerQuantification} assumes that exhaustive model checking is possible for $S$ and the compromised versions $A[S]$ for all attackers $A$. However, if exhaustive model checking is not possible (e.g., due to time limitations or memory restrictions), we consider an alternative formulation for \emph{Bounded Model Checking} (BMC): 
\begin{definition} [Attacker Classification via Bounded Model Checking]
\label{def:BoundedModelCheckingOfSystems}
Let  $S$ be a system, $R$ be a set of requirements, and $t$ be a natural number. Given a set of attackers $\set{A_1, \ldots , A_h}$, for each attacker $A$ , we compute the set $A[R]$ of requirements that $A$ can break \emph{using a strategy of length up to $t$} on the compromised system $A[S]$. 
\end{definition}

In the following, we show how to construct a SAT formula that describes the attacker classification problem via bounded model checking. 

\subsection{A SAT Formula for BMC up to $t$ Steps}
 For a requirement $r=\Always e$ and a time step $t\geq 0$, we are interested in finding an assignment of sources and attacker actions (i.e., an attack strategy) such that, for $0\leq k \leq t$, the value of $e(k)$ is false. We define the proposition $\mathtt{goal}(r,t)$ by
\begin{align}
\mathtt{goal}(\Always e,t) \triangleq \bigvee_{k=0}^t{\lnot e(k)}, 
\end{align}
 
We must inform the SAT solver of the equalities and dependencies between expressions given by the definition of the AIG (e.g., that $e(k) \Leftrightarrow \lnot v_1(k)$). Inspired by the work of Biere \emph{et al.} \cite{BMCWithoutBDDs}, we transform the equations into a Conjunctive Normal Form formula (CNF) that the SAT solver can work with. To transform Equations \ref{eq:badLatch} and \ref{eq:badGate}, we use \emph{Tseitin encoding} as follows:  the equation
\begin{align*}
v(0)= \begin{cases}
b, \quad &\text{if $v \not \in A$;}\\
A_v(0), \quad &\text{otherwise}
\end{cases}
\text{, becomes
{\small$\left(\lnot v^{\downarrow} \lor (v(0) \Leftrightarrow b ) \right)\land \left(v^{\downarrow} \lor (v(0) \Leftrightarrow A_v(0)) \right)$}}
\end{align*}
where $v^{\downarrow}$ is a literal that marks whether the latch $v$ is an element of the attacker $A$ currently being checked, i.e., we assume that $v^{\downarrow}$ is true if $v\in A$, and we assume that $v^{\downarrow}$ is false if $v\not \in A$. Consequently, if $v\not \in A$, then $v(0) \Leftrightarrow b$ must be true, and if $v \in A$, then $v(0)  \Leftrightarrow A_v(0) $ must be true. We denote this new proposition by $\mathtt{encode}(v,0)$, and it characterises the initial state.

Similarly, for $0\leq k<t$, each equation of the form 
\begin{align*}
&v(k+1) = \begin{cases}
e(k), \quad &\text{if $v\not \in A$;}\\
A_v(k+1), \quad &\text{otherwise},
\end{cases}\quad
\\\text{becomes }
&\left(v^{\downarrow} \lor (v(k+1) \Leftrightarrow e(k) ) \right)\land \left(\lnot v^{\downarrow} \lor (v(k+1) \Leftrightarrow A_v(k+1)) \right).
\end{align*}
We denote these new propositions by $\mathtt{encode}(v,k)$. We now use the Tseitin encoding of $p \Leftrightarrow (q \land r)$, i.e., $(p \lor \lnot q \lor \lnot r)\land (\lnot p \lor q)\land  (\lnot p \lor r)$, to encode gates. For $0\leq k \leq t$, the equation 
\begin{align*}
&g(k) = \begin{cases}
e_1(k)\land e_2(k), \quad &\text{if $g\not \in A$;}\\
A_g(k), \quad &\text{otherwise},
\end{cases}\\\text{becomes }
&\left(g^{\downarrow} \lor (g(k) \Leftrightarrow e_1(k)\land e_2(k) ) \right)\land \left(\lnot g^{\downarrow} \lor (g(k) \Leftrightarrow A_g(k)) \right),
\end{align*}
where $g^{\downarrow}$ is a literal that marks whether the gate $g$ is an element of the attacker $A$ currently being checked in a similar way that the literal $v^\downarrow$ works for the latch $v$. We denote this new proposition by $\mathtt{encode}(g,k)$.

Consider a component $c$ and an attacker $A$. If $c\in A$, then we assume $c^{\downarrow}$ and the value of component $c$ at time $k$ depends on the source literal $A_c(k)$; if $c\not\in A$, then we assume $\lnot c^{\downarrow}$ so we use the original semantics of the system (which depends only on the input literals). During bounded model checking, we need to find an assignment of inputs in $W$ and attacker actions for each component $c$ in $A$ over $t$ steps; thus, we need to assign at least ${|W\times A|\times t}$ literals.

The SAT problem for checking whether requirement $r$ is safe up to $t$ steps, denoted $\mathtt{check}(r,t)$, is defined by 
\begin{align}
\label{eq:naiveCheck}
\mathtt{check}(r,t)\triangleq\mathtt{goal}(r,t)\land\! \bigwedge_{c\in (V \cup G)}\left( \bigwedge_{k=0}^{t}{\mathtt{encode}(c,k)}\right).
\end{align}
\begin{proposition}
\label{prop:Correctness}
For a given attacker $A$ and a requirement $r=\Always e$, if we assume the literal $c^{\downarrow}$ for all $c \in A$ and we assume $\lnot x^{\downarrow}$ for all $x\not\in A$ (i.e., $x\in (V \cup G)-A$), then $A$ can break the requirement $r$ in $t$ steps (or less) if and only if $\mathtt{check}(r,t)$ is satisfiable.
\end{proposition}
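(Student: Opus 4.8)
The plan is to prove the biconditional by unfolding the definitions on both sides and showing that satisfying assignments of $\mathtt{check}(r,t)$ correspond exactly to pairs consisting of an input sequence and an attack strategy of length $t$ that witness a violation of $r$ within $t$ steps. The key observation is that, once we fix the truth values of all the ``membership literals'' $c^{\downarrow}$ according to whether $c \in A$, each conjunct $\mathtt{encode}(c,k)$ collapses to exactly the (Tseitin-encoded) equation defining the component $c$ at time $k$ in the compromised system $A[S]$: the clauses gated by the ``wrong'' polarity of $c^{\downarrow}$ become trivially true and disappear, leaving precisely $v(0) \Leftrightarrow b$ or $v(0) \Leftrightarrow A_v(0)$, $v(k+1) \Leftrightarrow e(k)$ or $v(k+1) \Leftrightarrow A_v(k+1)$, and $g(k) \Leftrightarrow e_1(k)\land e_2(k)$ or $g(k) \Leftrightarrow A_g(k)$, matching Equations~\ref{eq:badLatch} and~\ref{eq:badGate} exactly.

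First I would establish the ($\Leftarrow$) direction. Assume $\mathtt{check}(r,t)$ is satisfiable under the stated assumptions on the $c^{\downarrow}$; take a satisfying assignment $\sigma$. From $\sigma$ read off the input values $\vect{w}(k)$ (restrictions of $\sigma$ to the input literals $w(k)$ for $0 \le k \le t$) and the attack strategy $(\vect{a}_0,\dots,\vect{a}_t)$ with $\vect{a}_k(c) = \sigma(A_c(k))$ for $c \in A$. By the collapsing observation, $\sigma$ satisfies all the defining equations of $A[S]$ over the horizon $[0,t]$, so by induction on $k$ the values $\sigma(c(k))$ are forced to equal the actual values of the components of $A[S]$ under this input sequence and attack strategy (the induction uses that latches at time $0$ are pinned by $\mathtt{encode}(v,0)$, latches at time $k+1$ are determined by expressions at time $k$, and gates at time $k$ are determined by expressions at time $k$; one needs a mild well-foundedness remark that the gate dependency graph within a single time step is acyclic, which is implicit in the AIG definition). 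Since $\sigma$ also satisfies $\mathtt{goal}(r,t) = \bigvee_{k=0}^t \lnot e(k)$, there is some $k \le t$ with $e(k)$ false in $A[S]$; hence $A$ breaks $r$ with a strategy of length $\le t$.

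For the ($\Rightarrow$) direction I would run essentially the same argument in reverse: given an input sequence and an attack strategy of length $\le t$ witnessing that $A$ breaks $r$, define $\sigma$ by setting every component literal $c(k)$ to its computed value in $A[S]$, every input literal to the given input, every $A_c(k)$ to the given attack value, and each $c^{\downarrow}$ as prescribed; pad a strategy of length $< t$ out to length $t$ arbitrarily. Then each $\mathtt{encode}(c,k)$ is satisfied (again by the collapsing observation, since the computed values do satisfy the component equations), and $\mathtt{goal}(r,t)$ is satisfied because $e$ is violated at some step $\le t$. I expect the main obstacle to be stating the ``collapsing'' lemma and the induction cleanly — in particular being careful that the Tseitin encodings of the conditionals and of $p \Leftrightarrow q \land r$ are logically equivalent to the equations they replace (so that no satisfying assignments are spuriously added or removed), and handling the within-a-time-step gate dependencies so the induction is genuinely well-founded. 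Everything else is bookkeeping.
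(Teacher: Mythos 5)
Your proposal is correct and takes essentially the same route as the paper's proof: both directions are argued by translating between satisfying assignments of $\mathtt{check}(r,t)$ and pairs consisting of an input sequence and an attack strategy that witness a violation of $r$ within $t$ steps. The difference is only one of rigor --- you make explicit the collapsing of the $\mathtt{encode}(c,k)$ conjuncts under the $c^{\downarrow}$ assumptions and the induction showing the assignment reproduces the execution of $A[S]$, which the paper's much terser proof leaves implicit.
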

\begin{proof}

We first show that if $\mathtt{check}(r,t)$ is satisfiable, then $A$ can break the requirement. If $\mathtt{check}(r,t)$ is satisfiable then $\mathtt{goal}(r,t)$ is satisfiable, and $e(k)$ is false for some $k\leq t$. Moreover, there is an assignment of inputs $\vect{w}(k)$ and attacker actions $A_c(k)$, such that the $\mathtt{encode(c,k)}$ propositions are satisfied. By taking $\vect{w}_k =\vect{w}(k)$ and $\vect{a}_k(c)=A_c(k)$, we provide a witness input vector sequence and a witness attack strategy which proves that $A$ can break $r$ in $k$ steps (i.e., in $t$ steps or less since $k\leq t$).

We now show that if $A$ can break the requirement in $t$ steps or less, then $\mathtt{check}(r,t)$ is satisfiable. Since $A$ breaks $r=\Always e$ in $t$ steps or less, there must be an assignment of the input vector sequence $(\vect{w}_0, \ldots, \vect{w}_k)$ and an attacker strategy $(\vect{a}_0, \ldots, \vect{a}_k)$ such that $e(k)$ is false for some $k\leq t$; this satisifes $\mathtt{goal}(r,t)$, which, in turn, makes $\mathtt{check}(r,t)$ satisfiable.

\qed
\end{proof}

Algorithm~\ref{alg:BadQuantification} describes a naive strategy to compute the sets $A[R]$ for each attacker $A$; i.e. the set of requirements that $A$ breaks in $t$ steps (or less). Algorithm~\ref{alg:BadQuantification} works by solving, for each of the $2^{|V \cup G|}$ different attackers, a set of $|R|$ SAT problems, 
{
each of which has a size of at least $\mathcal{O}\left({|C|\times t}\right)$ on the worst case.}

\begin{figure}[!t]
\centering
{
\begin{framed}
\begin{algorithm}[H]
 \KwData{system $S=(W,V,G)$, a time step $t\geq 0$, a set of requirements $R$.}
 \KwResult{A map that maps the attacker $A$ to $A[R]$.}
Map $\mathcal{H}$\;
\For{\!\!\textbf{each} $r \in R$}
{
\For{\!\!\textbf{each} $A$ such that $A \subseteq (V\cup G)$}
	{
		\If{$\mathtt{check}(r,t)$ is satisfiable while assuming $c^\downarrow$ for all $c\in A$}
		{
			insert $r$ in $\mathcal{H}(A)$\;
		}
	 }
}
 \Return $\mathcal{H}$\;
 \caption{Naive attacker classification algorithm.}
 \label{alg:BadQuantification}
\end{algorithm}
\end{framed}
}
\end{figure}

In the rest of the section, we propose two sound heuristics in an attempt to improve Algorithm~\ref{alg:BadQuantification}: the first technique aims to reduce the size of the SAT formula, while the other aims to record and propagate the results of verifications among the set of attackers so that some calls to the SAT solver can be avoided.

\subsection{Isolation and Monotonicity}
The first strategy involves relying on \emph{isolation} to prove that it is impossible for a given attacker to break some requirements. To formally capture this notion, we first extend the notion of IOC to attackers. The IOC of an attacker $A$, denoted $\blacktriangle(A)$, is defined by the union of IOCs of the components in $A$; more precisely, $\blacktriangle(A) \triangleq \bigcup\set{\blacktriangle(c)|c \in A}.$

Informally, isolation happens whenever the IOC of $A$ is disjoint from the COI of $r$, implying that $A$ cannot interact with $r$.
\begin{proposition}[Isolation]
\label{theo:isolation}
Let $A$ be an attacker and $r$ be a requirement that is satisfied in the absence of $A$. If $\blacktriangle(A)\cap \blacktriangledown(r)=\emptyset$, then $A$ cannot break $r$.\end{proposition}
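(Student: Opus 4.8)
The plan is to show that the presence of $A$ does not change the value of the expression $e$ at any time step $k$, and hence the system behaves identically whether or not $A$ is present; since $r = \Always e$ is satisfied in the absence of $A$, it remains satisfied in the presence of $A$. The key observation is that the components whose equations are modified by $A$ are exactly those in $A$, and their modifications can only propagate to components in $\blacktriangle(A)$. Conversely, the value of $e(k)$ depends only on components in $\blacktriangledown(r) = \blacktriangledown(e)$. So if these two sets are disjoint, the modifications never reach $e$.

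First I would make precise the claim that ``$e(k)$ depends only on $\blacktriangledown(e)$'' by an induction on time $k$ together with the structural recursion defining $\blacktriangledown$. Concretely, I would prove the following invariant: for every component $c$ and every $k$, if $c \notin \blacktriangle(A)$, then the value of $c(k)$ in $A[S]$ equals its value in the uncompromised system $S$ (under the same input sequence). The base case handles latches at time $0$ and gates at time $0$; the inductive step handles latches at time $k+1$ (whose value depends on an expression over components at time $k$) and gates at time $k$ (whose value depends on expressions over components at time $k$). The crucial point in the inductive step is that if $c \notin \blacktriangle(A)$, then first, $c \notin A$ (since $c \in \blacktriangle(c)$, so $c \in \blacktriangle(A)$ would follow if $c \in A$), meaning the equation for $c$ is \emph{not} modified; and second, every component $c'$ appearing in the right-hand expression of $c$'s equation also satisfies $c' \notin \blacktriangle(A)$, because $c' \in \blacktriangledown(c)$ and so $\blacktriangle(c') \ni c$ would be inconsistent — more carefully, if some $a \in A$ had $c' \in \blacktriangle(a)$, i.e. $a \in \blacktriangledown(c')$, then since $\blacktriangledown(c') \subseteq \blacktriangledown(c)$ by the recursive definition, we would get $a \in \blacktriangledown(c)$, i.e. $c \in \blacktriangle(a) \subseteq \blacktriangle(A)$, a contradiction. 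So the inductive hypothesis applies to all the subexpression components, and the value of $c(k)$ is computed by the same (unmodified) equation from the same sub-values, hence is unchanged.

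Having established this invariant, I would finish by noting that $\blacktriangle(A) \cap \blacktriangledown(r) = \emptyset$ means every component $c \in \blacktriangledown(e)$ satisfies $c \notin \blacktriangle(A)$, so by the invariant $c(k)$ is the same in $A[S]$ and $S$ for all $k$; therefore $e(k)$, being a boolean combination ($0$, $1$, $c$, or $\lnot c$) of such values, is also unchanged. Since $S$ satisfies $\Always e$, there is no input sequence making $e(k)$ false in $S$, hence none in $A[S]$ either (for any attack strategy, since the strategy is irrelevant to these values); thus $A$ cannot break $r$.

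The main obstacle I anticipate is getting the chain of set inclusions in the inductive step exactly right — in particular the direction of the IOC/COI containments and the fact that $\blacktriangledown$ is ``transitively closed downward'' in the sense that $c' \in \blacktriangledown(c)$ implies $\blacktriangledown(c') \subseteq \blacktriangledown(c)$, which needs its own small induction on the structure of the COI definition. One has to be a little careful that the latch case of $\blacktriangledown$ refers to the transition equation (time $k+1$ depends on time $k$) while the gate case is combinational (same time $k$), so the induction on $k$ must be set up to cover both the ``within the same step'' gate dependencies and the ``across one step'' latch dependencies; a clean way is to induct on $k$ with an inner induction on a topological order of the gates at step $k$. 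Everything else is routine bookkeeping.
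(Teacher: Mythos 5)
Your proof is correct and takes essentially the same approach as the paper's: the paper argues (very briefly, by contradiction) that for $A$ to affect any component $c \in \blacktriangledown(r)$ there must be a dependency from a component controlled by $A$ to $c$, forcing $c \in \blacktriangle(A)$ and contradicting disjointness. Your induction on time steps (with the inner topological induction on gates and the lemma that $c' \in \blacktriangledown(c)$ implies $\blacktriangledown(c') \subseteq \blacktriangledown(c)$) is just a rigorous elaboration of that same propagation argument, with no gaps.
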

\begin{proof}

For the attacker $A$ to break the requirement $r$, there must be a component $c \in \blacktriangledown(r)$ whose behaviour was affected by the presence of $A$, and whose change of behaviour caused $r$ to fail. However, for $A$ to affect the behaviour of $c$, there must be a dependency between the variables directly controlled by $A$ and $c$, since $A$ only chooses actions over the components it controls; implying that $c\in \blacktriangle(A)$. This contradicts the premise that the IOC of $A$ and the COI of $r$ are disjoint, so the component $c$ cannot exist. \qed
\end{proof}
Isolation reduces the SAT formula by dismissing attackers that are outside the COI of the requirement to be verified. Isolation works similarly to \emph{COI reduction} (see~\cite{ToSplitOrToGroup,GraphLabelingForEfficientCOIComputation,HandbookOfSatisfiability,HandbookOfModelChecking,OptimizedModelCheckingOfMultipleProperties}), and it transforms Equation \ref{eq:naiveCheck} into
\begin{align}
\label{eq:isolation}
\mathtt{check}(r,t)\triangleq\mathtt{goal}(r,t)\land\! \bigwedge_{c\in (\blacktriangledown(r)- W)}\left( \bigwedge_{k=0}^{t}{\mathtt{encode}(c,k)}\right)
\end{align}

The second strategy uses \emph{monotonicity} relation between capabilities and power of attackers.
\begin{proposition}[Monotonicity]
\label{theo:monotonicity}
For attackers $A$ and $B$ and a set of requirements $R$, if $A\subseteq B$, then $A[R]\subseteq B[R]$.
\end{proposition}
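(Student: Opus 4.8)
The plan is to prove monotonicity by a \emph{simulation} argument: whenever $A \subseteq B$, the more capable attacker $B$ can replay, step by step, any execution that $A$ uses to break a requirement. Concretely, suppose $r = \Always e \in A[R]$. By definition there is an input sequence $\AsSequence{\vect{w}_0, \ldots, \vect{w}_k}$ and an attack strategy $(\vect{a}_0, \ldots, \vect{a}_k)$ for $A$ such that, in the compromised system $A[S]$ driven by these inputs and attacks, the expression $e(k)$ evaluates to false. Since the AIG is acyclic, fixing the inputs and the attack strategy determines a unique value $c(j)$ for every component $c \in C$ and every time $j$ with $0 \le j \le k$ (latch values from the initial and transition equations, gate values from the gate equations, input values from the sequence, and the values of attacked components directly from the strategy).

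Next I would let $B$ imitate this execution. Define an attack strategy $(\vect{a}'_0, \ldots, \vect{a}'_k)$ for $B$ by $\vect{a}'_j(c) = \vect{a}_j(c)$ for $c \in A$ and $\vect{a}'_j(c) = c(j)$ for $c \in B \setminus A$, where $c(j)$ is the value recorded above, and keep the same input sequence. The crux is the claim that the execution of $B[S]$ under these inputs and this attack strategy assigns to every component $c$ at every time $j \le k$ exactly the value $c(j)$ it had in $A[S]$. I would prove this by induction on $j$, with an inner induction on the topological order of gates within a single time step (well-founded because the AIG is acyclic). In the inductive step there are three cases per component: if $c \in A$ then both systems use the attacker's value and $\vect{a}'_j(c) = \vect{a}_j(c)$; if $c \in B \setminus A$ then $B[S]$ uses $\vect{a}'_j(c) = c(j)$ by construction while $A[S]$ used $c$'s original equation, but both yield $c(j)$; if $c \notin B$ then both systems evaluate $c$'s original equation on sub-expressions that, by the induction hypotheses (earlier time steps, or earlier gates at the same step), already agree, and hence produce the same value $c(j)$.

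Applying the claim to the component(s) appearing in $e$ at time $j = k$ shows that $e(k)$ is false in $B[S]$ as well, so $B$ breaks $r$ and $r \in B[R]$; since $r$ was arbitrary, $A[R] \subseteq B[R]$. The main obstacle is the bookkeeping in the induction: gate equations refer to expressions at the \emph{same} time step, so the naive ``induction on $j$'' must be refined to respect the intra-step dependency order of the DAG, and one must check that $B$ feeding the recorded values $c(j)$ back as attack actions on $B\setminus A$ genuinely reconstructs $A[S]$'s trace rather than merely a consistent-looking one. Everything else is a routine case analysis; I would also note the argument is agnostic to whether $A[R]$ denotes the exhaustive or the bounded-by-$t$ notion, since the witness and its length $k$ (with $k \le t$ in the bounded case) are simply carried over unchanged. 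Alternatively, for the bounded version one gets a short proof via Proposition~\ref{prop:Correctness}: any assignment satisfying $\mathtt{check}(r,t)$ under the literal assumptions for $A$ extends to one under the assumptions for $B$ by choosing, for each newly-freed literal $A_c(k)$ with $c \in B\setminus A$, the value that $\mathtt{encode}(c,k)$ already forces on $c(k)$.
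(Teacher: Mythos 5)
Your proof is correct, and its core idea is the same as the paper's: since $A\subseteq B$, attacker $B$ can reuse the strategy that $A$ used to falsify $e(k)$. The difference is that the paper's one-sentence argument ("$B$ can always choose the same attack strategies that $A$ used") silently skips the only non-trivial point, namely that a strategy for $B$ must also assign values to the components in $B\setminus A$, and an arbitrary choice there could change the run. You close exactly this gap by having $B$ feed back, on $B\setminus A$, the values $c(j)$ recorded from the run of $A[S]$, and then proving by induction on time (refined by the intra-step topological order of the gate DAG) that the two runs coincide on every component up to time $k$; this is the rigorous version of what the paper leaves implicit, at the cost of an acyclicity bookkeeping argument that is standard for AIGs. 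Your alternative proof for the bounded case is a genuinely different route not taken by the paper: working directly at the level of Proposition~\ref{prop:Correctness}, a satisfying assignment of $\mathtt{check}(r,t)$ under the assumptions for $A$ is extended to one under the assumptions for $B$ by setting each newly active literal $A_c(k)$, $c\in B\setminus A$, to the value of $c(k)$ in that assignment, so that the clause $\lnot c^{\downarrow}\lor(c(k)\Leftrightarrow A_c(k))$ is satisfied while $\mathtt{goal}(r,t)$ and all other $\mathtt{encode}$ constraints are untouched; this buys a purely propositional argument with no induction, but it only establishes the bounded notion of breaking, whereas the semantic simulation covers the unbounded definition of $A[R]$ used in the proposition.
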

\begin{proof}
If $A$ is a subset of $B$, then attacker $B$ can always choose the same attack strategies that $A$ used to break the requirements in $A[R]$; thus, $A[R]$ must be a subset of $B[R]$.
 \qed
\end{proof}
Monotonicity allows us to define the notion of {minimal (successful) attackers} for a requirement $r$: attacker $A$ is a \emph{minimal attacker} for requirement $r$ if and only if $A$ breaks $r$, and there is no attacker $B\subset A$ such that $B$ also breaks $r$. In the remainder of this section, we expand on this notion, and we describe a methodology for attacker classification that focuses on the identification of these minimal attackers. 

\subsection{Minimal (Successful) Attackers}
The set of minimal attackers for a requirement $r$ partitions the set of attackers into those that break $r$ and those who do not. Any attacker that is more capable than a minimal attacker is guaranteed to break $r$ by monotonicity (cf. Proposition~\ref{theo:monotonicity}), and any attacker that is less capable than a minimal attacker cannot break $r$; otherwise, this less capable attacker would be a minimal attacker. Consequently, we can reduce the problem of attacker classification to the problem of finding the minimal attackers for all requirements.

\subsubsection{Existence of a Minimal Attacker.} Thanks to isolation (cf. Proposition~\ref{theo:isolation}) we can guarantee that a requirement $r$ that is safe in the absence of an attacker $A$ remains safe in the presence of $A$ if $\blacktriangledown(r) \cap \blacktriangle(A)$ is empty. Thus, for each requirement $r\in R$, the set of attackers that could break $r$ is $\ThePowersetOf{\blacktriangledown(r)-W}$. Out of all the attackers of $r$, the most capable attacker is $\blacktriangledown(r)-W$, so we can test whether there \emph{exists} any attacker that can break $r$ in $t$ steps by solving $\mathtt{check}(r,t)$ against attacker $\blacktriangledown(r)-W$. 
For succinctness, we henceforth denote the attacker $\blacktriangledown(r)-W$ by $r^{max}$.
\begin{corollary}
From monotonicity and isolation (cf. Propositions \ref{theo:monotonicity} and \ref{theo:isolation}), if attacker $r^{max}$ cannot break the requirement $r$, then there are no minimal attackers for $r$. Equivalently, if $r^{max}$ cannot break $r$, then $r$ does not belong to any set of broken requirements $A[R]$.
\end{corollary}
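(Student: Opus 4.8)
The plan is to derive this corollary directly from the two propositions already established, treating it as a routine consequence rather than a fresh argument. First I would unpack the definition of a minimal attacker for $r$: such an attacker $A$ must break $r$, and must satisfy $A \subseteq \blacktriangledown(r) - W$, because any component outside $\blacktriangledown(r)$ lies outside the cone of influence, hence outside $\blacktriangle(A)\cap\blacktriangledown(r)$ considerations—more precisely, if $A$ contained a component $c \notin \blacktriangledown(r)-W$, one could drop $c$ and, by isolation reasoning, the smaller attacker would still break $r$ (or $c$ is an input, which the paper already argues adds no power), contradicting minimality. So every minimal attacker for $r$ is a subset of $r^{max}$.

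Next, the core step: suppose $r^{max}$ cannot break $r$. Take any candidate minimal attacker $A$ for $r$; by the previous paragraph $A \subseteq r^{max}$. By monotonicity (Proposition~\ref{theo:monotonicity}), $A[R] \subseteq r^{max}[R]$, so if $A$ broke $r$ then $r \in A[R] \subseteq r^{max}[R]$, meaning $r^{max}$ breaks $r$—contradiction. Hence no attacker that is a subset of $r^{max}$ breaks $r$, and since minimal attackers must be such subsets, there are no minimal attackers for $r$. For the ``equivalently'' clause, I would observe the contrapositive: if $r \in A[R]$ for some attacker $A$, then $A$ breaks $r$; isolation (Proposition~\ref{theo:isolation}) forces $\blacktriangle(A)\cap\blacktriangledown(r)\neq\emptyset$ when $r$ is safe in the absence of attackers, and the components of $A$ that actually matter all lie in $\blacktriangledown(r)-W$, so $A \cap r^{max}$ already breaks $r$; by monotonicity $r^{max}$ breaks $r$, contradicting the hypothesis. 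One should note the implicit assumption that $r$ is satisfied in the absence of any attacker, which is the standing hypothesis for isolation to apply; if $r$ already fails on the uncompromised system, then the empty attacker breaks it and $r^{max}$ trivially does too, so the statement still holds vacuously in that degenerate case.

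The only delicate point—and the step I expect to need the most care—is justifying that a minimal attacker for $r$ must be contained in $r^{max} = \blacktriangledown(r) - W$, i.e., that controlling components outside the cone of influence (or controlling inputs) cannot be ``essential'' to an attack. The paper's argument for ignoring input-controlling attackers (invariants must hold for all inputs) handles the $W$ part, and Proposition~\ref{theo:isolation} combined with the observation that removing an out-of-COI component from $A$ does not shrink $\blacktriangle(A)\cap\blacktriangledown(r)$ handles the rest; making this rigorous requires a small lemma that if $c \notin \blacktriangledown(r)$ then $A \setminus \{c\}$ breaks $r$ whenever $A$ does, which follows because the value of $e(k)$ depends only on components in $\blacktriangledown(r)$ and the behaviour of those components under $A\setminus\{c\}$ agrees with their behaviour under $A$ for any fixed attack strategy and input sequence. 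Everything else is a direct chaining of Propositions~\ref{theo:monotonicity} and~\ref{theo:isolation}, so the proof will be short.
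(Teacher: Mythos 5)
Your proof is correct and follows essentially the same route the paper takes (implicitly, in the paragraph preceding the corollary): isolation confines the attackers that could possibly break $r$ to subsets of $r^{max}=\blacktriangledown(r)-W$, and monotonicity then transfers the failure of the most capable such attacker, $r^{max}$, to all of its subsets, hence to every attacker. Your additional care in reducing an arbitrary attacker $A$ to $A\cap r^{max}$ and in noting the standing assumption that $r$ holds on the uncompromised system (with the degenerate case handled vacuously) merely makes explicit what the paper leaves implicit.
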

\begin{figure}[!h]
\centering
{
\begin{framed}
\begin{algorithm}[H]
 \KwData{system $S$, a requirement $r$, and a time step $t\geq 0$. }
\KwResult{set $M$ of \emph{minimal} attackers for $r$, bounded by $t$.}
{
\If{$check(r,t)$ is \textbf{not} satisfiable while assuming $c^\downarrow$ for all $c\in r^{max}$}
	{	
		\Return $\emptyset$\;
	}
}
Set: $P=\set{\emptyset}$, $M= \emptyset$; \quad /\!/($P$ contains the empty attacker $\emptyset$)\\
\While{$P$ is not empty}
{
	extract $A$ from $P$ such that the size of $A$ is minimal\;
	\If{\textbf{not} (exists $B \in M \text{ such that } B \subseteq A)$}
	{	
        	\eIf{$\mathtt{check}(r,t)$ is satisfiable when assuming $c^\downarrow$ for all $c\in A$}
        	{
        		insert $A$ in $M$\;
        	}
        	{
        		 \For{\!\!\textbf{each} $c \in (r^{max} - A)$}
        		 {
        			insert $A \cup \set{c}$ in $P$\;
        	 	}
        
        	}
	}
	
}
\Return $M$\;

 \caption{The $\mathtt{MinimalAttackers}$ algorithm.}
 \label{alg:CheckRequirement}
\end{algorithm}
\end{framed}}
\vspace{-0.5cm}
\end{figure}
\subsubsection{Finding Minimal Attackers.} 
After having confirmed that at least one minimal attacker for $r$ exists, we can focus on finding them. Our strategy consist of systematically increasing the capabilities of attackers that fail to break the requirement $r$ until they do. Algorithm~\ref{alg:CheckRequirement} describes this empowering procedure to computes the set of minimal attackers for a requirement $r$, which we call $\mathtt{MinimalAttackers}$. As mentioned, we first check to see if a minimal attacker exists (Lines 1-3); then we start evaluating attackers in an orderly fashion by always choosing the smallest attackers in the set of pending attackers $P$ (Lines 5-16). Line 7 uses monotonicity to discard the attacker $A$ if there is a successful attacker $B$ with $B\subseteq A$. Line 8 checks if the attacker $A$ can break $r$ in $t$ steps (or less); if so, then $A$ is a minimal attacker for $r$ and is included in $M$ (Line 9); otherwise, we empower $A$ with a new component $c$, and we add these new attackers to $P$ (Lines 11-13). We note that Line 11 relies on isolation, since we only add components that belong to the COI of $r$.

\begin{quote}
We recall the motivational example from Section~\ref{sec:Example}. Consider the computation of $\mathtt{MinimalAttackers}$ for requirement $r_2$. In this case, $r_2^{max}$ is $\set{g1,g2,v_1}$, which is able to break $r_2$, confirming the existence of (at least) a minimal attacker (Lines 1-3). We start to look for minimal attackers by checking the attacker $\emptyset$ (Lines 5-8); after we see that it fails to break $r_2$, we conclude that $\emptyset$ is not a minimal attacker and that we need to increase its capabilities. We then derive the attackers $\set{g_1}, \set{g_2}$ and $\set{v_1}$ by adding one non-isolated component to $\emptyset$, and we put them into the set of pending attackers (Lines 11-13). For attackers $\set{v_1}$ and $\set{g_2}$, we know that they can break the requirement $r_2$, so they get added to the set of minimal attackers, and are not empowered (Line 9); however, for attacker $\set{g_1}$, since it fails to break $r_2$, we increase its capabilities and we generate attackers $\set{v_1, g_1}$ and $\set{g_1, g_2}$. Finally, for these two latter attackers, since the minimal attackers $\set{v_1}$ and $\set{g_2}$ have already been identified, the check in Line 7 fails, and they are dismissed from the set of pending attackers, since they cannot be minimal. The algorithm finishes with $M=\set{\set{v_1}, \set{g_2}}$.
\end{quote}

Algorithm~\ref{alg:MinimalAttackers} applies Algorithm~\ref{alg:CheckRequirement} to each requirement; it collects all minimal attackers in the set $\mathcal{M}$ and initialises the attacker classification map $\mathcal{H}$. Finally, Algorithm~\ref{alg:GoodQuantification} exploits monotonicity to compute the classification of each attacker $A$ by aggregating the requirements broken by the minimal attackers that are subsets of $A$.

\begin{quote}
For the motivational example in Section~\ref{sec:Example}, Algorithm~\ref{alg:MinimalAttackers} returns $\mathcal{M}=\set{\emptyset,\set{v_1}, \set{g_2}}$ and $\mathcal{H}=\set{(\emptyset,\set{r_1}), (\set{v_1},\set{r_2, r_3}),  (\set{g_2},\set{r_2, r_3})}$. From there,  Algorithm~\ref{alg:GoodQuantification} completes the map $\mathcal{H}$, and returns
\begin{align*}
\mathcal{H}=\{&(\emptyset,\set{r_1}), (\set{v_1},\set{r_1,r_2, r_3}),  (\set{g_1},\set{r_1}),(\set{g_2},\set{r_1,r_2, r_3}), \\
&(\set{v_1,g_2},\set{r_1,r_2, r_3}),(\set{v_1,g_2},\set{r_1,r_2, r_3}),\\
&(\set{g_1,g_2},\set{r_1,r_2, r_3}),(\set{v_1,g_1,g_2},\set{r_1,r_2, r_3})\}
\end{align*}
\end{quote}
\vspace{-0.5cm}

\begin{figure}[!t]
\begin{framed}
\centering
{
\begin{algorithm}[H]
 \KwData{system $S$, a time step $t\geq 0$, and a set of requirements $R$.}
 \KwResult{Set of all minimal attackers $\mathcal{M}$ and an initial classification map $\mathcal{H}$.}
Set: $\mathcal{M}=\emptyset$\;
Map: $\mathcal{H}$\;
\For{\!\!\textbf{each} $r \in R$}
	{
		\For{\!\!\textbf{each} $A \in \mathtt{MinimalAttackers}(S,t,r)$}
		{
			insert $r$ in $\mathcal{H}(A)$\;
			insert $A$ in $\mathcal{M}$\;
		}
	 }
 \Return $(\mathcal{M},\mathcal{H})$\;
 \caption{The $\mathtt{AllMinimalAttackers}$ algorithm. }
 \label{alg:MinimalAttackers}
\end{algorithm}}
\end{framed}
\vspace{-0.5cm}
\end{figure}

\begin{figure}[!t]
\centering
{
\begin{framed}
\begin{algorithm}[H]
 \KwData{system $S=(W,V,G)$, a time step $t\geq 0$, a set of requirements $R$.}
 \KwResult{A map $\mathcal{H}$ that maps the attacker $A$ to $A[R]$.}

$(\mathcal{M},\mathcal{H})=\mathtt{AllMinimalAttackers}(S,t,R)$\;
\For{\!\!\textbf{each} $A \subseteq {(V\cup G)}$}
{
\For{\!\!\textbf{each} $A' \in \mathcal{M}$}
	{
		\If{$A' \subseteq A$}
		{
			insert all elements of $\mathcal{H}(A')$ in $\mathcal{H}(A)$\;
		}
	 }
}
 \Return $\mathcal{H}$\;
 \caption{Improved classification algorithm. We assume that $\mathcal{H}$ initially maps every $A$ to the empty set.}
 \label{alg:GoodQuantification}
\end{algorithm}
\end{framed}}
\vspace{-0.5cm}
\end{figure}

\subsection{On Soundness and Completeness}
\label{sec:completeness}
Just like any bounded model checking problem, if the time parameter $t$ is below the \emph{completeness threshold} (see \cite{EfficientComputationOfRecurrenceDiameters}), the resulting attacker classification up to $t$ steps could be \emph{incomplete}. More precisely, an attacker classification up to $t$ steps may prove that an attacker $A$ cannot break some requirement $r$ with a strategy up to $t$ steps, while in reality $A$ can break $r$ by using a strategy whose length is strictly greater than $t$. There are practical reasons that justify the use of a time parameter that is lower than the completeness threshold: 1) computing the exact completeness threshold is often as hard as solving the model-checking problem \cite{HandbookOfModelChecking}, so an approximation is taken instead; and 2), the complexity of the classification problem growths exponentially with $t$ in the worst case, since the size of the SAT formulae grow with $t$, and there is an exponential number of attackers that need to be classified by making calls to the SAT solver. A classification that uses a $t$ below the completeness threshold, while possibly incomplete, is \emph{sound}, i.e., it does not falsely report that an attacker can break a requirement when in reality it cannot. In Section~\ref{sec:discussion} we discuss possible alternatives to overcome incompleteness, but we leave a definite solution as future work.

We also consider the possibility of limiting the maximum size of minimal attackers to approximate the problem of attacker classification. The result of a classification whose minimal sets are limited to have up to $z$ elements is also sound but incomplete, since does not identify minimal attackers that have more than $z$ elements. We show in Section~\ref{sec:evaluation} that, even with restricted minimal attackers, it is possible to obtain a high coverage of the universe of attackers.
 
\subsection{Requirement Clustering}
\emph{Property clustering} \cite{ToSplitOrToGroup,HandbookOfSatisfiability,OptimizedModelCheckingOfMultipleProperties} is a state-of-the-art technique for the model checking of multiple properties. Clustering allows the SAT solver to reuse information when solving a similar instance of the same problem, but under different assumptions. To create clusters for attacker classification, we combine the SAT problems whose COI is similar (i.e., requirements that have a Jaccard index close to 1), and incrementally enable and disable properties during verification. More precisely, to use clustering, instead of computing $\mathtt{goal}(r,t)$ for a single requirement, we compute $\mathtt{goal}(Y,t)$ for a cluster $Y$ of requirements, defined by 
\begin{align}
\mathtt{goal}(Y,t) \triangleq \bigwedge_{r\in Y} (\lnot r^\downarrow \lor \mathtt{goal}(r,t)).
\end{align}
where $r^\downarrow$ is a new literal that plays a similar role to the ones used for gates and latches; i.e., we assume $r^\downarrow$ when we want to find the minimal attackers for $r$, and we assume $\lnot y^\downarrow$ for all other requirements $y \in Y$. 

The SAT problem for checking whether the cluster of requirements $Y$ is safe up to $t$ steps is
\begin{align}
\label{eq:MasterEq}
\mathtt{check}(Y,t)\triangleq\mathtt{goal}(Y,t)\land\! \bigwedge_{c\in (\blacktriangledown(Y)- W)}\left( \bigwedge_{k=0}^{t}{\mathtt{encode}(c,k)}\right),
\end{align}
where $\blacktriangledown(Y)=\bigcup \set{\blacktriangledown(r)| r\in Y}$.

\section{Evaluation}
\label{sec:evaluation}
In this section, we perform experiments to evaluate how effective is the use of isolation and monotonicity for the classification of attackers, and we evaluate the completeness of partial classifications for different time steps. 

For evaluation, we use a sample of AIG benchmarks from past Hardware Model-Checking Competitions (see \cite{HWMCC2011,HWMCC2013}), from their multiple-property verification track. Each benchmark has an associated list of invariants to be verified which, for the purposes of this evaluation, we interpret as the set of security requirements. As of 2014, the benchmark set was composed of 230 different instances, coming from both academia and industrial settings \cite{HWMCC2014BM}. We quote from \cite{HWMCC2014BM}:
\begin{quote}
``Among industrial entries, 145 instances
belong to the SixthSense family (6s*, provided by IBM), 24 are Intel benchmarks (intel*),
and 24 are Oski benchmarks. Among the academic related benchmarks, the set includes 13
instances provided by Robert (Bob) Brayton (bob*), 4 benchmarks coming from Politecnico
di Torino (pdt*) and 15 Beem (beem*). Additionally, 5 more circuits, already present in
previous competitions, complete the set.''
\end{quote}
All experiments are performed on a quad core MacBook with 2.9 GHz Intel Core i7 and 16GB RAM, and we use the SAT solver CaDiCaL version 1.0.3 \cite{Cadical}. The source code of the artefact is available at \cite{aig-ac-asset}.

We separate our evaluation in two parts: 1) a comparative study where we evaluate the effectiveness of using of monotonicity and isolation for attacker classification in several benchmarks, and 2) a case study, where we apply our classification methodology to a single benchmark --\texttt{pdtvsarmultip}-- and we study the results of varying the time parameters for partial classification.

\subsection{Evaluating Methodologies}
Given a set of competing classification methodologies $\mathcal{M}_1,\ldots, \mathcal{M}_n$ (e.g., Algorithm~\ref{alg:BadQuantification} and Algorithm~\ref{alg:GoodQuantification}), each methodology is given the same set of benchmarks $S_1,\ldots, S_m$, each with its respective set of requirements $R_1, \ldots, R_m$. To evaluate a methodology $\mathcal{M}$ on a benchmark $S=(W,V,G)$ with a set of requirements $R$, we allow $\mathcal{M}$ to ``learn'' for about 10 minutes per requirement by making calls to the SAT solver, and produce a (partial) attacker classification $\mathcal{H}$.  Afterwards, we compute the coverage metric obtained by $\mathcal{M}$, defined as follows.

\begin{definition}[Coverage]
Let $\ThePowersetOf{V}$ be the set of all attackers, and let $\mathcal{H}$ be the attacker classification produced by the methodology $\mathcal{M}$. We recall that $\mathcal{H}$ is a map that maps each attacker $A$ to a set of requirements, and in the ideal case, $\mathcal{H}(A)=A[R]$, for each attacker $A$. The \emph{attacker coverage obtained by methodology $\mathcal{M}$ for a requirement $r$} is the percentage of attackers $A\in\ThePowersetOf{V}$ for which we can correctly determine whether $A$ breaks $r$ by computing $r\in \mathcal{H}(A)$ (i.e., we do not allow guessing and we do not allow making new calls to the SAT solver).
\end{definition}

We also measure the execution time of the classification per requirement. More precisely, the time it takes for the methodology to find minimal attackers, capped at about 10 minutes per requirement. We force stop the classification for each requirement if a timeout occurs, but not while the SAT solver is running (i.e., we do not interrupt the SAT solver), which is why sometimes the reported time exceeds 10 minutes.

\subsection{Effectiveness of Isolation and Monotonicity}
To test the effectiveness of isolation and monotonicity, we selected a small sample of seven benchmarks. For each benchmark, we test four variations of our methodology: 
\begin{enumerate}
\item{$(+IS, +MO)$}: Algorithm~\ref{alg:GoodQuantification}, which uses both isolation and monotonicity
\item{$(+IS,-MO)$}: Algorithm~\ref{alg:GoodQuantification} but removing the check for monotonicity on Algorithm~\ref{alg:CheckRequirement}, Line 7;
\item{$(-IS,+MO)$}: Algorithm~\ref{alg:GoodQuantification} but using Equation~\ref{eq:isolation} instead of Equation~\ref{eq:MasterEq} to remove isolation while preserving monotonicity; and 
\item{$(+IS,+MO)$} Algorithm~\ref{alg:BadQuantification}, which does not use isolation nor monotonicity. 
\end{enumerate}

{The benchmarks we selected have an average of 173 inputs, 8306 gates, 517 latches, and 80 requirements. } Under our formulation of attackers, these benchmarks have on average $2^{8823}$ attackers
. However, since an attacker that controls a gate $g$ can be emulated by an attacker that controls all latches in the sources of $g$, we restrict attackers to be comprised of only latches; reducing the size of the set of attackers from $2^{8823}$ to $2^{517}$ on average per benchmark. Furthermore, we arbitrarily restrict the number of components that minimal attackers may control to a maximum of 3, which implies that, on a worst case scenario, we need to make a maximum of $80\times\sum_{k=0}^3 \binom {517}k$ calls to the SAT solver per benchmark. We also arbitrarily define the time step parameter $t$ to be 10.

Figure~\ref{fig:AverageCoverage} illustrates the average coverage for the four different methodologies, for each of the seven benchmarks. The exact coverage values are reported in Table~\ref{tab:appendix} in the Appendix. We see that our methodology consistently obtains the best coverage of all the other methodologies, with the exception of benchmark $\mathtt{6s155}$, where the methodology that removes isolation triumphs over ours. We attribute this exception to the way the SAT solver reuses knowledge when working incrementally; it seems that, for $(-IS,+MO)$, the SAT solver can reuse more knowledge than for $(+IS,+MO)$, which is why $(-IS,+MO)$ can discover more minimal attackers in average than $(+IS,+MO)$ (see Table~\ref{tab:appendix}).

We observe that the most significant element in play to obtain a high coverage is the use of monotonicity. Methodologies that use monotonicity always obtain better results than their counterparts without monotonicity. Isolation does not show a trend for increasing coverage, but has an impact in terms of classification time. Figure~\ref{fig:AverageExecTime} presents the average classification time per requirement for the benchmarks under the different methodologies. We note that removing isolation often increases the average classification time of classification methodologies; the only exception --benchmark $\mathtt{6s325}$-- reports a smaller time because the SAT solver ran out of memory during SAT solving about $50\%$ of the time, which caused an early termination of the classification procedure. This early termination also reflects on the comparatively low coverage for the method $(-IS,+MO)$ in this benchmark, reported on Figure~\ref{fig:AverageCoverage}.

\begin{figure}[!t]
\centering
\includegraphics[width=\textwidth]{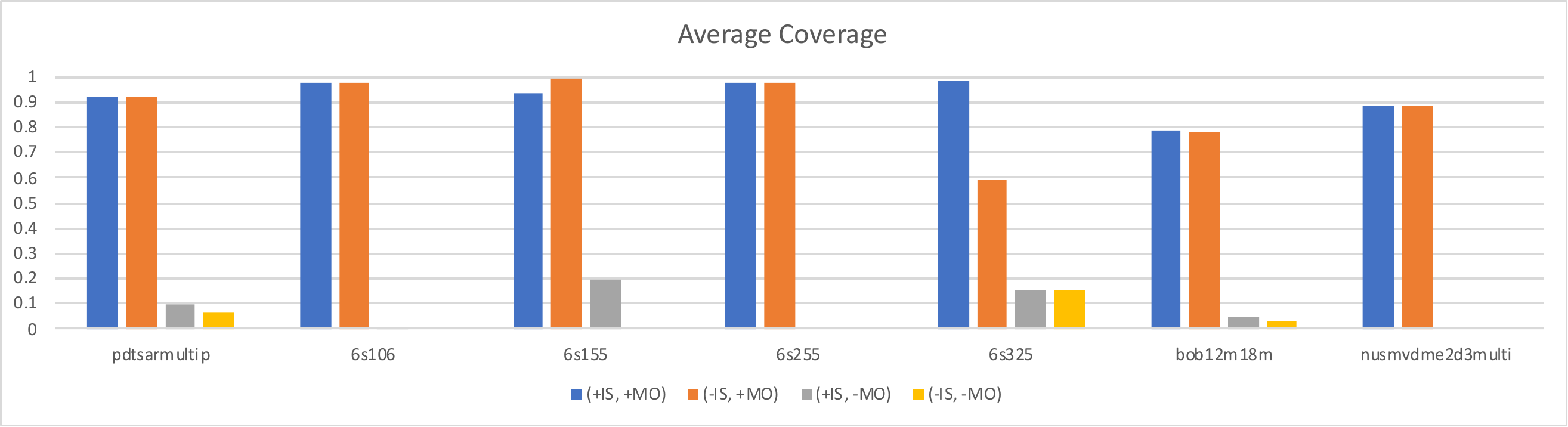}
\caption{Average requirement coverage per benchmark. A missing bar indicates a value that is approximately 0.}
\label{fig:AverageCoverage}
\vspace{0.5cm}
\includegraphics[width=\textwidth]{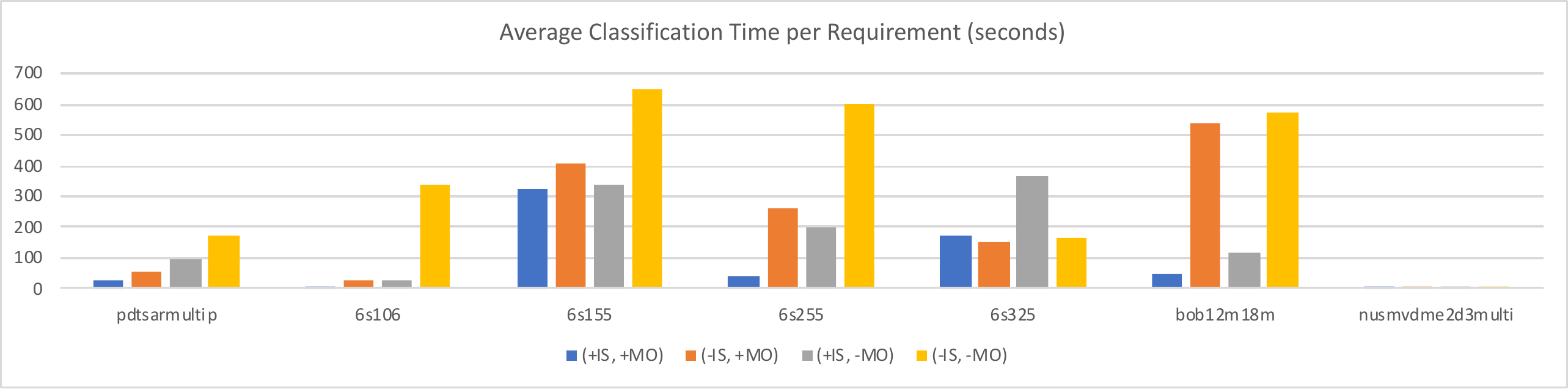}
\caption{Average classification time per requirement per benchmark. The time for benchmark nusmvdme2d3multi is very close to 0 in all instances. }
\label{fig:AverageExecTime}
\end{figure}

\subsection{Partial Classification of the \texttt{pdtvsarmultip} Benchmark}
The benchmark \texttt{pdtvsarmultip} has 17 inputs, 
130 latches, 
2743 gates, 
and has an associated list of 33 invariant properties, out of which 31 are unique and we interpret as the list of security requirements. 
Since we are only considering attackers that control latches, there are a total of $2^{130}$ attackers that need to be classified for the 31 security requirements.

We consider 6 scenarios for partial classification up to $t$, with $t$ taking values in $\set{0,1,5,10,20,30}$. For each requirement, we obtain the execution time of classification (ms), the size of the set of source latches for the requirement (\#C), the number of minimal attackers found (\#Min.), the total number of calls to the SAT solver (\#SAT), the average number of components per minimal attacker (\#C./Min) and the coverage for the requirement (Cov.). We present the average of these measures in Table~\ref{tab:pdt}. 

Normally, the attacker classification behaves in a similar way to what is reported for requirement $ \Always \lnot g_{2177}$, shown in Table~\ref{tab:pdt2324}. More precisely, coverage steadily increases and stabilises as we increase $t$. However, we like to highlight two interesting phenomena that may occur: 1) coverage {may} \emph{decrease} as we increase the time step (e.g., as shown in Table~\ref{tab:pdt2367}
), and 2) the number of minimal attackers decreases while the coverage increases, as shown in Tables~\ref{tab:pdt2324} and ~\ref{tab:pdt2367}.

Case 1) occurs because the set of attackers that can effectively interact with the system at time 0 is rather small, i.e., $2^6$, while the set of attackers that can affect the system at times 0 and 1 has size $2^{26}$. The size of this set increases with time until it stabilises at $2^{66}$, which is the size of the set of attackers that cannot be dismissed by isolation. 

Case 2) occurs because the minimal attackers that are found for smaller time steps represent a small percentage of the set of attackers that can affect the system, so there is very little we can learn by using monotonicity. More precisely, those minimal attackers control a relatively large set of components, which they need to be successful in breaking requirements, as shown in Step 5, column \#C./Min in Tables~\ref{tab:pdt2324} and ~\ref{tab:pdt2367}. By considering more time steps, we are allowing attackers that control less components to further propagate their actions through the system, which enables attack strategies that were unsuccessful for smaller choices of time steps.

\begin{table}[!t]
\centering
\begin{tabular}{|c|c|c|c|c|c|c|}
\hline
Steps & ms &  \#C. & \#Min. & \#SAT & \#C./Min. & Cov.\\
\hline
0 & 683.1290323 & 34.96774194 & 2.451612903 & 16328.77419 & 1.4 & 0.527277594\\
1 & 2387.548387 & 46.22580645 & 6.387096774 & 24420 & 1.650232484 & 0.572533254\\
5 & 5229.935484 & 58.93548387 & 44.90322581 & 28355.29032 & 1.639645689 & 0.84956949\\
10 & 24967.12903 & 58.93548387 & 151.1935484 & 25566.54839 & 1.460869285 & 0.918973269\\
20 & 13632.51613 & 58.93548387 & 17.67741935 & 20849.70968 & 1.176272506 & 0.979354259\\
30 & 12208.25806 & 58.93548387 & 15.93548387 & 20798.16129 & 1.104563895 & 0.979354274\\
\hline
\end{tabular}
\caption{Average measures for all requirements per time steps.}
\label{tab:pdt}
\centering
\vspace{0.5cm}
\begin{minipage}{0.475\textwidth}
\centering
\resizebox{\textwidth}{!}{
\begin{tabular}{|c|c|c|c|c|c|c|}
\hline
\multicolumn{7}{|c|}{$\Always \lnot g_{2177}$} \\
\hline
Steps & ms & \#C & \#Min. & \#SAT & \#C./Min. & Cov.\\
\hline
0 & 895 & 59 & 0 & 34281 & -- & 5.94E-14\\
1 & 2187 & 66 & 10 & 47378 & 2 & 0.499511\\
5 & 1735 & 66 & 205 & 12476 & 1.912195 & 0.999997\\
10 & 968 & 66 & 27 & 9948 & 1 & 0.999999\\
20 & 1275 & 66 & 27 & 9948 & 1 & 0.999999\\
30 & 1819 & 66 & 27 & 9948 & 1 & 0.999999\\
 \hline
\end{tabular}
}
\caption{Coverage for requirement $\Always \lnot g_{2177}$.}
\label{tab:pdt2324}
\end{minipage}
\begin{minipage}{0.475\textwidth}
\centering
\resizebox{\textwidth}{!}{
\begin{tabular}{|c|c|c|c|c|c|c|}
\hline
\multicolumn{7}{|c|}{$\Always \lnot g_{2220}$} \\
\hline
Steps & ms &  \#C. & \#Min. & \#SAT & \#C./Min. & Cov.\\
\hline
0 & 1 & 6 & 1 & 28 & 1 & 0.90625\\
1 & 86 & 26 & 1 & 2628 & 1 & 0.500039\\
5 & 4511 & 67 & 17 & 47664 & 2.588235 & 0.852539\\
10 & 3226 & 67 & 6 & 37889 & 1 & 0.984375\\
20 & 3355 & 67 & 6 & 37889 & 1 & 0.984375\\
30 & 3562 & 67 & 6 & 37889 & 1 & 0.984375\\
 \hline
\end{tabular}
}
\caption{Coverage for requirement $\Always \lnot g_{2220}$.}
\label{tab:pdt2367}
\end{minipage}
\end{table}
By taking an average over all requirements, we observe that coverage seems to steadily increase as we increase the number of steps for the classification, as reported in Table~\ref{tab:pdt}, column Cov. The low coverage for small $t$ is due to the restriction on the size of minimal attackers. More precisely, for small $t$, attackers can only use short strategies, which limits their interaction with the system; we expect attackers to control a large number of components if they want to successfully influence a requirement in this single time step, and since we restricted our search to attackers of size 3 maximum, these larger minimal attackers are not found (e.g., as reported in Table~\ref{tab:pdt2324} for Step 0).

We conclude that experimental evidence favours the use of both monotonicity and isolation for the classification of attackers, although  some exceptions may occur for the use of isolation. Nevertheless, these two techniques help our classification methodology $(+IS,+MO)$ consistently obtain significantly better coverage when compared to the naive methodology $(-IS,-MO)$.

\section{Related Work}
\label{sec:discussion}

\subsubsection{On Defining Attackers.} Describing an adequate attacker model to contextualise the security guarantees of a system is not a trivial task. Some attacker models may be adequate to provide guarantees over one property (e.g. confidentiality), but not for a different one (e.g., integrity). Additionally, depending on the nature of the system and the security properties being studied, it is sensible to describe attackers at different levels of abstraction. For instance, in the case of security protocols, Basin and Cremers define attackers in~\cite{KnowYourEnemy} as combinations of compromise rules that span over three dimensions:  \emph{whose} data is compromised, \emph{which} kind of data it is, and \emph{when} the compromise occurs. In the case of Cyber-physical Systems (CPS), works like \cite{Giraldo2018,Simei} model attackers as sets of components (e.g., some sensors or  actuators), while other works like \cite{IFCPSSec,Cardenas2011,Urbina2016} model attackers that can arbitrarily manipulate any control inputs and any sensor measurements at will, as long as they avoid detection. In the same context of CPS, Rocchetto and Tippenhauer \cite{CPSProfiles} model attackers more abstractly as combinations of quantifiable traits (e.g., insider knowledge, access to tools, and financial support), which, when provided a compatible system model, ideally fully define how the attacker can interact with the system. 

Our methodology for the definition of attackers combines aspects from~\cite{KnowYourEnemy,Giraldo2018} and \cite{Simei}. The authors of~\cite{KnowYourEnemy} define symbolic attackers and a set of rules that describe how the attackers affect the system, which is sensible since many cryptographic protocols are described symbolically. Our methodology describes attackers as sets of components (staying closer to the definitions of attackers in \cite{Giraldo2018} and \cite{Simei}), and has a lower level of abstraction since we describe the semantics of attacker actions in terms of how they change the functional behaviour of the AIG, and not in terms of what they ultimately represent. This lower level of abstraction lets us systematically and exhaustively generate attackers by simply having a benchmark description, but it limits the results of the analysis to the benchmark; Basin and Cremers can compare among different protocol implementations, because attackers have the same semantics even amongst different protocols. If we had an abstraction function from sets of gates and latches to symbolic notions (e.g., ``components in charge of encryption'', or ``components in charge of redundancy''), then it could be possible to compare results amongst different AIGs.

\subsubsection{On Efficient Classification.} 
The works by Cabodi, Camurati and Quer \cite{GraphLabelingForEfficientCOIComputation}, Cabodi et. al \cite{ToSplitOrToGroup}, and Cabodi and Nocco \cite{OptimizedModelCheckingOfMultipleProperties} present several useful techniques that can be used to improve the performance of model checking when verifying multiple properties, including COI reduction and property clustering. We also mention the work by Goldberg et al. \cite{JustAssume} where they consider the problem of efficiently checking a set of safety properties $P_1$ to $P_k$ by individually checking each property while assuming that all other properties are valid. Ultimately, all these works inspired us to incrementally check requirements in the same cluster, helping us transform Equation~\ref{eq:naiveCheck} into Equation~\ref{eq:MasterEq}. Nevertheless, we note that all these techniques are described for model checking systems in the absence of attackers, which is why we needed to introduce the notions of isolation and monotonicity to account for them. Finally, it may be possible to use or incorporate other techniques that improve the efficiency of BMC in general (e.g., interpolation \cite{Interpolation}).

\subsubsection{On Completeness.} As mentioned in Section~\ref{sec:completeness}, if the time parameter for the classification is below the {completeness threshold}, the resulting attacker classification is most likely {incomplete}. To guarantee completeness, it may be possible to adapt existing termination methods (see \cite{ProvingMorePropertiesWithBMC}) to consider attackers. Alternatively, methods that compute a good approximation of the completeness threshold (see \cite{EfficientComputationOfRecurrenceDiameters}) which guarantee the precision of resulting the coverage should help improve the completeness of attacker classifications. Interpolation \cite{Interpolation} can also help finding a guarantee of completeness. Also, the verification techniques IC3 \cite{IC3,IC32} and PDR \cite{IC3}, which have seen some success in hardware model checking, may address the limitation of boundedness.

\subsubsection{On Verifying Non-Safety Properties.} In this work, we focused our analysis exclusively on safety properties of the form $\Always e$. However, we believe that it is possible to extend this methodology to other types of properties, it is possible to efficiently encode Linear Temporal Logic formulae for bounded model checking \cite{BMCWithoutBDDs,lmcs:2236}. The formulations of the SAT problem change for the different nature of the formulae, but both isolation and monotonicity should remain valid heuristics, since they ultimately refer to how strategies of attackers can be inferred, not how they are constructed.  

\section{Conclusion and Future Work}
\label{sec:conclusion}
In this work, we present a methodology to model check systems in the presence of attackers with the objective of mapping each attacker to the list of security requirements that it breaks. This mapping of attackers creates a classification for them, defining equivalence classes of attackers by the set of requirements that they can break. The system can then be considered safe in the presence of attackers that cannot break any requirement. While it is possible to perform a classification of attackers by exhaustively performing model checking, the exponential size of the set of attackers renders this naive approach impractical. Thus, we rely on ordering relations between attackers to efficiently classify a large percentage of them, and we demonstrate empirically by applying our methodology to a set of benchmarks that describe hardware systems at a bit level.

In our view, ensuring the completeness of the attacker classification is the most relevant direction for future work. Unlike complete classifications, incomplete classifications cannot provide guarantees that work in the general case if minimal attackers are not found. We also note that the effectiveness of monotonicity for classification is directly related to finding minimal attackers. Consequently, our methodology may benefit from any other method that helps in the identification of those minimal attackers. In particular, we are interested in checking the effectiveness of an approach where, instead of empowering attackers, we try to \emph{reduce} successful attackers into minimal attackers by removing unnecessary capabilities. This, formally, is an \emph{actual causality analysis} \cite{ActualCausality} of successful attackers.

\bibliographystyle{splncs03}

\appendix
\section*{Appendix}
\begin{landscape}
\begin{table}[!t]
\centering
\centering
\begin{tabular}{|c|c|c|c|c|c|c|c|}
\hline
Benchmark & pdtsarmultip & 6s106 & 6s155 & 6s255 & 6s325 & bob12m18m & nusmvdme2d3multi\\
\hline
\multicolumn{8}{|c|}{Average Coverage per Requirement}\\
\hline
(+IS, +MO) & 0.918973269 & 0.977620187 & 0.93762207 & 0.977172852 & 0.986949581 & 0.785075777 & 0.8853302\\
(-IS, +MO) & 0.916622423 & 0.977008259 & 0.99609375 & 0.9765625 & 0.590205544 & 0.781704492 & 0.8853302\\
(+IS, -MO) & 9.95E-02 & 2.01E-03 & 0.1953125 & 1.73828E-11 &  0.156147674 & 4.61E-02 & 4.52E-15\\
(-IS, -MO) & 6.45E-02 & 9.10E-36 & 6.00E-72 & 2.0838E-222 & 0.156146179 & 3.29E-02 & 4.52E-15\\
\hline
\multicolumn{8}{|c|}{Average Classification Time per Requirement}\\
\hline
(+IS, +MO) & 24967.12903 & 3088.882353 & 326372.2188 & 38396.125 & 174066.6047 & 49916.59211 & 3530.666667\\
(-IS, +MO) & 56187.48387 & 25090.47059 & 408165.4063 & 263457.125 & 152425.3889 & 541870.25 & 3348.333333\\
(+IS, -MO) & 97326.29032 & 28483.11765 & 340715.0313 & 197275.75 & 363913.8472 & 114330.0855 & 3890\\
(-IS, -MO) & 170630.7742 & 341430.7059 & 646926.0625 & 602012.5 & 162959.5436 & 572080.0987 & 4262.333333\\
\hline
\multicolumn{8}{|c|}{Average Number of Components to Build Attackers From per Requirement}\\
\hline
(+IS, +MO) & 58.93548387 & 35.47058824 & 9 & 93.875 & 173.1196013 & 124.0460526 & 63\\
(-IS, +MO) & 82 & 135 & 257 & 752 & 1668 & 261 & 63\\
(+IS, -MO) & 58.93548387 & 35.47058824 & 9 & 93.875 &  173.1196013 & 124.0460526 & 63\\
(-IS, -MO) & 82 & 135 & 257 & 752 & 1668 & 261 & 63\\
\hline
\multicolumn{8}{|c|}{Average Number of Identified Successful Attackers per Requirement}\\
\hline
(+IS, +MO) & 151.1935484 & 16.05882353 & 7.5 & 2.5625 & 237.3056478 & 34.93421053 & 129.3333333\\
(-IS, +MO) & 146.0322581 & 16.05882353 & 8 & 2.4375 & 46.42424242 & 34.68421053 & 129.3333333\\
(+IS, -MO) & 15860.25806 & 14693.17647 & 98 & 13792.9375 &  214293.01 & 22700.58553 & 1282\\
(-IS, -MO) & 27207.54839 & 105842.7059 & 126133.2813 & 20657 & 2502.25641 & 40526.72368 & 1282\\
\hline
\multicolumn{8}{|c|}{Average Number of Calls to SAT Solver per Requirement}\\
\hline
(+IS, +MO) & 25566.54839 & 1978.647059 & 10.4375 & 11382.125 & 583123.7209 & 300825.9605 & 40576.33333\\
(-IS, +MO) & 58420.83871 & 294477.9412 & 2022640.25 & 35252.1875 & 51411.24747 & 1607803.138 & 40576.33333\\
(+IS, -MO) & 41275.6129 & 16655.76471 & 101 & 23455.875 & 550652.3721 & 322031.1513 & 41729\\
(-IS, -MO) & 85390.80645 & 396510 & 1389339.563 & 49366.1875 & 56238.01026 & 1266308.526 & 41729\\
\hline
\multicolumn{8}{|c|}{Average Minimal Number of Components Needed by Successful Attackers per Requirement}\\
\hline
(+IS, +MO) & 1.460869285 & 1.004524887 & 1 & 0.464285714 & 1.649279888 & 1.142237928 & 2.843533741\\
(-IS, +MO) & 1.460869285 & 1.004524887 & 1 & 0.4375 & 1 & 1.123402209 & 2.843533741\\
(+IS, -MO) & 2.912532493 & 2.909934933 & 2.1953125 & 2.885495873 & 2.927835684 & 2.964760733 & 2.984195449\\
(-IS, -MO) & 2.975171828 & 2.983304749 & 2.665659086 & 1.986925862 & 1.977314997 & 2.973301799 & 2.984195449\\
\hline
\end{tabular}
\caption{Average metrics per requirement for the different benchmarks.}
\label{tab:appendix}
\end{table}
\end{landscape}

\end{document}